\newtheorem{theorem}{Theorem}
\newtheorem{prop}{Proposition}
\newtheorem{lemma}{Lemma}
 \theoremstyle{definition}
\newcommand\RE{\mathbb{R}}
\newcommand\rd{\mathrm{d}}
\DeclareMathOperator{\Co}{Co}
\begin{document}

\title{Observer-based predictor \\ for a SIR model with delays}
\author{Sabine Mondi\'{e} and Fernando Casta\~nos}

\maketitle

\abstract{
We propose an observer for a SIR epidemic model. The observer
is then uplifted into a predictor to compensate for time delays in
the input and the output. Tuning criteria are given for tuning gains
of the predictor, while the estimation-error stability is ensured using
Lyapunov-Krasovskii functionals. The predictor's performance is
evaluated in combination with a time-optimal control. We show that
the predictor nearly recovers the performance level of the delay-free system.
}

\section{Introduction}

Epidemics are a good example of how reality challenges researchers,
offering the opportunity to show the strength of existing techniques
and develop new ones in fields as varied as medicine, biology, computational sciences, 
and mathematical system theory.

Epidemiological models have been primarily used for prediction purposes,
while mitigation policies are usually decided based on exhaustive simulations. 
From the perspective of control theory, an epidemic is viewed as a dynamical
system with controlled variables. Its model is an instrument for designing a control
action that will achieve the desired outcome. Depending on the context, different
assumptions on the model and other control objectives can be formulated. Most works
focus on vaccination or treatment policies, with the goals expressed in an optimal-control
framework~\cite{sharomi2017}. In the context of the current covid 
pandemic, with vaccines unavailable at present, attention is shifting towards intervention
policies based on social distancing measures. Considering the dramatic effect that extended
lockdowns have on people and countries economies, a minimum-time control using social
distancing measures and considering hospital capacity restrictions was recently presented
for the SIR model~\cite{angulo2020}.

The SIR model, which we also consider in this paper, is arguably the simplest epidemiological
model. However, it already exhibits many of the nonlinear characteristics that are present in
more elaborate models. We make the model more realistic by adding features such as inaccurate
and partial state measurements, and input and measurement delays. In recent months, delays in
measurements and policy implementation have proved to be critical in the success or failure of
government strategies. The former correspond to the time taken for the tests to be carried out,
processed, verified, and made available in centralized databases. The latter correspond to the
time it takes the population to adopt restrictions such as quarantine, social distancing habits,
and mask use.
 
Due to the prevalence of delays in the feedback loops of control systems and the associated
detrimental effects on performance, input and output delays have received sustained
attention in the past decades. Some early strategies for compensating the delays are the Smith
controller~\cite{smith1959controller}, the transformation-based reduction 
approach~\cite{artstein1982linear}, and the time-domain predictor-based designs~\cite{manitius1979finite}. 
Based on present state information, predictors based on Cauchy's formula 
provide the state ahead of time~\cite{Bellman1963}. They were formally shown to ensure closed-loop 
stability~\cite{krstic2008boundary}, but their practical implementation reveals instabilities
due to neutral phenomena related to the integrals' discretization in Cauchy's formula.
These issues inspired new proposals such as filtered 
predictors~\cite{mondie2003finite,kharitonov2015predictor}, and truncated
predictors~\cite{zhou2012truncated}. If the present state is not entirely measurable,
it can be replaced by an estimation, provided that the system is observable~\cite{KARAFYLLIS20133623}.

A recent approach to the compensation of delays consists of modifying 
an observer to predict future states. It was first introduced
for the case of full-state information~\cite{Najafi2013} and later extended to the
partial information scenario~\cite{Lechappe2018}. This approach, called observer-predictor, 
suffers from some drawbacks: It loses the exact nature of
predictions obtained with Cauchy's formula and requires the inclusion
of extra sub-predictors designed via LMI techniques. However, it has
significant advantages: The observer has the same structure as the
system (modulo an output injection term), thus avoiding integrals in the 
prediction formulae. Also, it is readily applicable to the case of partial
state information in observable systems, especially when observers are readily
available. Systems with state delays~\cite{ZHOU2017368} or nonlinear 
systems~\cite{Velasco} can be modified easily to successfully compensate for input or output delays.

To tackle the complexity due to partial state availability, delay, and measurement errors,
we resort to a wide array of tools available to specialists in the field of control of 
dynamical systems. For the control, we use a recent optimal law~\cite{angulo2020}.
The objective is not to steer the epidemics towards a desired equilibrium. Instead,
the aim is to track an optimal trajectory. As a result, the dynamics for the estimation
error are time-varying and time-delayed. The stability of such dynamics is addressed from 
both the perspective of classical frequency-domain quasipolynomial
analysis~\cite{Neimark1949,MichielsNiculescu2007}, and from the perspective of time-domain
Lyapunov-Krasowskii analysis~\cite{gu,fridmanE}. In particular, the system's
time-varying nature is taken into account by embedding the system into a model with
polytopic uncertainty~\cite{boyd,he2004}.

In Section~\ref{sec:Problem statement}, we introduce the SIR model and discuss the 
issues we want to overcome. An \emph{ad hoc} change of variable allows designing an 
observer addressing incomplete state information for the delay-free system. In 
Section~\ref{sec:Observer-based},
this observer is developed into an observer-based predictor for
the system with input and output delays. The next two sections are devoted to
tuning the observer: A simple criterion to tune the observer gains is
given in Section~\ref{sec:tune}, and conditions for the stability of the prediction error
dynamics are given in Section~\ref{sec:Stab_linear}. The impact of measurement
errors is discussed in Section~\ref{sec:noise}. We conclude with some remarks. 
We show the validity of our approach by discussing a SIR case study along with the paper, 
which is of interest in its own right.

Allow us to recall some standard notation used in the literature of 
time-delay systems.

\paragraph{Notation.}
$PC([-\eta,0],\RE^{n})$ is the set of piece-wise continuous functions defined
on the interval $[-\eta,0]$. Consider a time-delay differential equation 
\begin{equation}  \label{eq:general}
 \dot{\varepsilon}(\tau) = f(\tau,\varepsilon(\tau),\varepsilon(\tau-\eta(\tau))) \;.
\end{equation}
The time-varying delay, $\eta(\tau)$, is bounded by $0 < \eta(\tau) \le \bar{\eta}$. Given an initial function
$\varphi \in PC([-\eta(0),0],\RE^{n})$ the solution is denoted by $\varepsilon(\tau,\varphi)$. 
The restriction of $\varepsilon(\tau,\varphi)$ on the interval $[\tau-\eta(\tau),\tau]$ is denoted by
\begin{equation*}
 \varepsilon_{\tau}(\varphi) : \theta \mapsto \varepsilon(\tau+\theta,\varphi) \;, 
  \quad \theta \in [-\eta(\tau),0] \;.
\end{equation*}
We will make use of the \emph{trivial function} $0_{\bar{\eta}} : \theta \mapsto 0$, $\theta \in [-\bar{\eta},0]$.
We use the Euclidean norm $\Vert \cdot \Vert$ for vectors and the
corresponding induced norm for matrices. For $\varphi \in PC([-\eta,0],\RE^{n})$
we use the norm 
\begin{equation*}
 \Vert \varphi \Vert_{\eta} = \underset{\theta \in [-\eta,0]} \sup \Vert
  \varphi (\theta )\Vert \;.
\end{equation*}
The notation $Q>0$ means that the symmetric matrix $Q$ is positive definite.

\section{Problem statement} \label{sec:Problem statement}

We consider a state-space SIR model 
\begin{align*}
 \frac{\rd}{\rd t}S(t) &= -\beta(t-h_1) S(t)I(t) \\
 \frac{\rd}{\rd t}I(t) &= (\beta(t-h_1) S(t)-\gamma) I(t) \\
                  y(t) &= I(t-h_2) \;.
\end{align*}
Here, $S > 0$, $I > 0$ denote the susceptible and the infected, respectively.
The model is normalized, hence $S+I \le 1$. The
transmission rate, $\beta \in [\beta_{\min}, \beta_{\max}]$ with 
$\beta_{\max} > \beta_{\min} > 0$, can be controlled by applying social 
distancing measures, but such measures take effect $h_1$ units of time later.
The only information available at time $t$ is the number of infected people
at time $t-h_2$. The recovery/death rate, $\gamma > 0$, and the time delays, 
$h_1, h_2$, are assumed to be known.

The are of course more sophisticated models. It is possible to include exposed 
individuals (infected but not infectious),  to distinguish between symptomatic
and asymptomatic, dead and recovered, etc. However, for epidemics the parameters
of which have large levels of uncertainty, such as covid-19, a simple model with fewer
parameters is preferable, as long as it is able to reproduce the main features
of the epidemics (hospital saturation, lock-down effects, herd immunity, and so 
forth). A simpler model is also preferable when the objective is to devise decision 
strategies, rather that simulating long-term behavior.
 
Using only the history of $y$, we wish to produce predictions $\hat{S}, \hat{I}$
such that 
\begin{equation*}
 \lim_{t \to \infty} (\hat{S}(t)),\hat{I}(t)) = (S(t+h_1),I(t+h_1)) \;.
\end{equation*}
Our motivation is that, if we have a feedback $\beta^*(S,I)$ that is known
to perform correctly on the system without delays, we can set 
$\beta = \beta^*(\hat{S},\hat{I})$ and expect to recover a similar 
performance\footnote{The full analysis would have to be performed, of course.}.

For concreteness, we consider the optimal-time control strategy described 
by Angulo \textit{et al}~\cite{angulo2020}. For the SIR model, the basic (unmitigated)
reproduction number is computed as $R_0 = \beta_{\max}/\gamma$, while the controlled 
(mitigated) reproduction number is $R_c = \beta_{\min}/\gamma$. Suppose that the 
health system capacity of a given city is limited to $I_{\max}$ infected people. 
The strategy that ensures $I(t) \le I_{\max}$ and achieves herd immunity in a minimal time is
\begin{equation} \label{eq:beta_optimal}
 \beta^\star(S,I) = 
  \begin{cases}
   \beta_{\max} & \text{if $I < \Phi(S)$ or $S \le 1/R_0$} \\
   \beta_{\min} & \text{otherwise}
  \end{cases}
\end{equation}
with
\begin{displaymath}
 \Phi(S) = 
  \begin{cases}
   I_{\max} + \frac{1}{R_c}\ln\left(\frac{S}{S^\star}\right) - (S-S^\star) & \text{if $S^\star \le S \le 1$} \\
   I_{\max}                                                                & \text{if $\frac{1}{R_0} \le S \le S^\star$}
  \end{cases}
\end{displaymath}
and
\begin{displaymath}
 S^\star = \min\left\{\frac{1}{R_c},1\right\} \;.
\end{displaymath}

We consider a recovery rate $\gamma = 1/7$ with time units given in days~\cite{angulo2020}. 
For illustration purposes, we consider the case of Mexico City. The number of ICU
beds is such that $I_{\max} = 12.63 \times 10^{-3}$ (see~\cite{angulo2020}). We take the reproduction
numbers as $R_0 = 1.7$ and $R_c = 1.1$~\footnote{Taken from \url{https://epiforecasts.io/covid/posts/national/mexico/},
June 18, 2020.}. This gives
\begin{displaymath}
 \beta_{\max} = 1.7/7 \quad \text{and} \quad \beta_{\max} = 1.1/7 \;.
\end{displaymath}
 
\begin{figure}
\centering
\includegraphics[width=1.0\columnwidth]{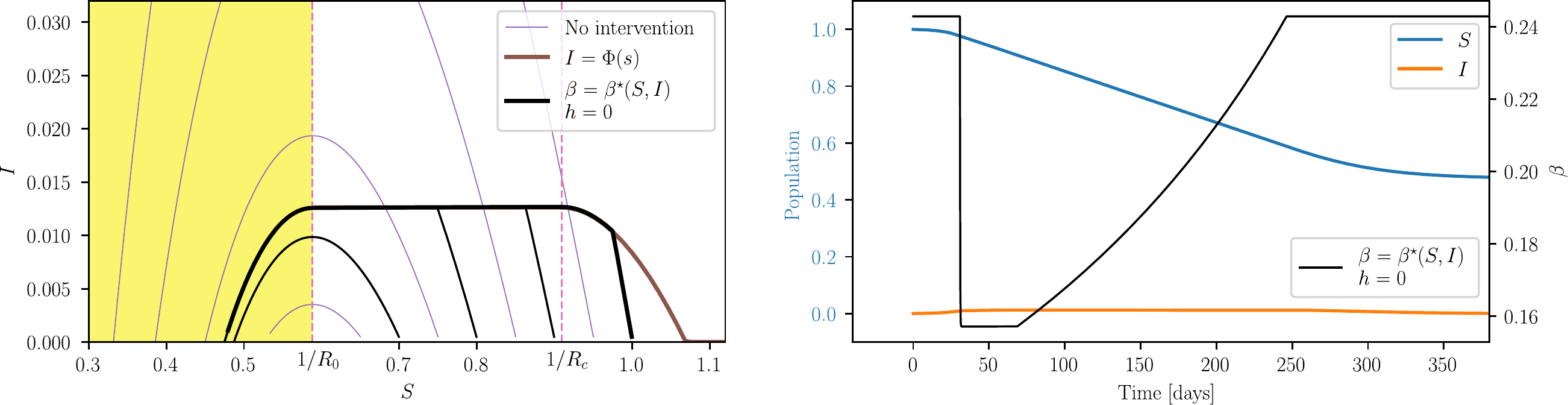}
\caption{Phase-plane of the model under the optimal feedback law~\eqref{eq:beta_optimal} (left). 
 The trajectory with thick line is depicted on the right (blue and orange, scale shown on the
 left axis). The control action is also included (black, scale shown on the right axis).} 
\label{fig:delay_free_full_info}
\end{figure}

The optimal response, achieved with full state-feedback in the absence of delays,
is shown in Fig.~\ref{fig:delay_free_full_info}. The optimal strategy is to 
allow the epidemic to run free until it reaches the sliding curve $I = \Phi(S)$. 
The state is then driven along this curve towards the region of herd immunity
(yellow rectangle) where the intervention finally stops.

\begin{figure}
\centering
\includegraphics[width=1.0\columnwidth]{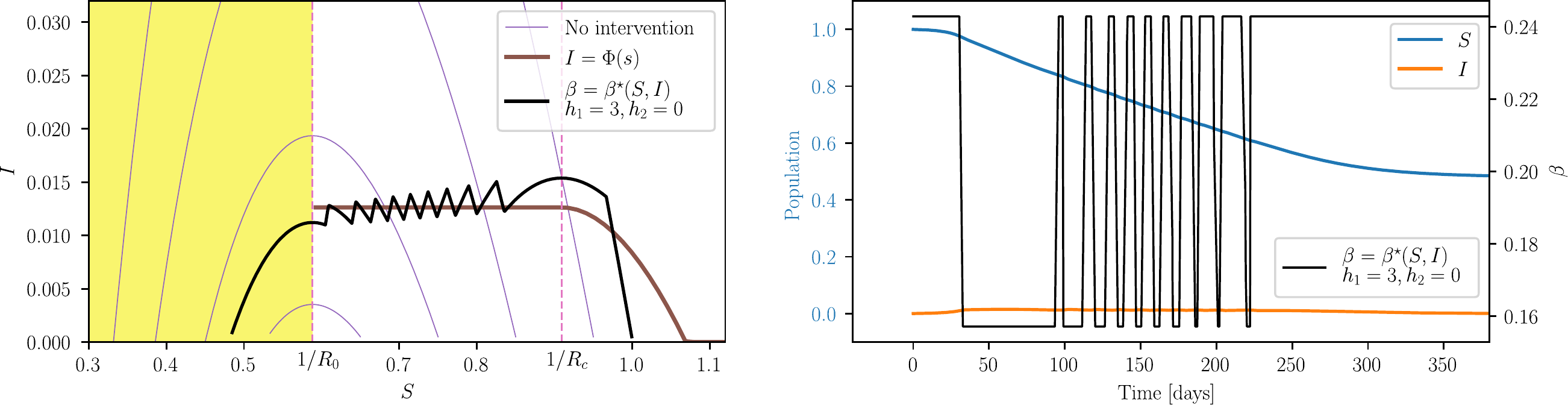}
\caption{Trajectory corresponding to the feedback law~\eqref{eq:beta_optimal} subject
 to a delay of $h_1 = 3$ days that ultimately results in chattering.} 
\label{fig:delayed_input_full_info}
\end{figure}

Suppose now that there is a delay of $h_1 = 3$ days in the control action. 
Fig.~\ref{fig:delayed_input_full_info} confirms the appearance of 
the so-called \emph{chattering} effect, which should not be surprising given the
discontinuous nature of~\eqref{eq:beta_optimal}. We can expect the
performance of the closed-loop system to deteriorate even further when only the 
number of infected people is available for measurement, and 
when such measurements are also subject to important delays. The objective of the 
predictor, developed in the following section, is to mitigate these unfavorable effects.

\section{Observer-based predictor} \label{sec:Observer-based}

To attain our objective we follow the approach in which an observer for the delay-free system 
is constructed in a first step, and then developed into a predictor in a second 
one~\cite{Najafi2013,ZHOU2017368,Villa2017}.

\subsection{Delay-free observer}

We begin by making the temporary assumption $h_1 = h_2=0$. Note that, by
setting $x_1 = \ln(y)$ and $x_2 = S$, we obtain the model 
\begin{align*}
 \frac{\rd}{\rd t}x_1(t) &= \beta(t) x_2(t) - \gamma \\
 \frac{\rd}{\rd t}x_2(t) &= \beta(t) \rho(x(t))
\end{align*}
with $\rho(x) = -x_2 e^{x_1}$. The main advantage over the original 
model is that the first equation is affine in the state. We can then write 
the simple observer 
\begin{align*}
 \frac{\rd}{\rd t}\breve{x}_1(t) &= \beta(t) \left(\breve{x}_2(t) 
  + \alpha_1(x_1(t)-\breve{x}_1(t))\right) - \gamma \\
 \frac{\rd}{\rd t}\breve{x}_2(t) &= \beta(t) \left(\rho(\breve{x}(t)) 
  + \alpha_2(x_1(t)-\breve{x}_1(t)) \right) \;,
\end{align*}
where $\alpha_1, \alpha_2$ will be defined below (Proposition~\ref{prop:delay_free}).
Consider the error $\tilde{x} = x - \breve{x}$. By using the expansion
\begin{displaymath}
 \rho(\breve{x}) = \rho(x) + 
  \begin{pmatrix}
   -x_2e^{x_1} & -e^{x_1} 
  \end{pmatrix}
  \tilde{x}  + \mathcal{O}(\|\tilde{x}\|^2) \;,
\end{displaymath}
we can write the error dynamics as 
\begin{equation} \label{eq:linear_error}
 \frac{\rd}{\rd t}
 \begin{pmatrix}
  \tilde{x}_1(t) \\ 
  \tilde{x}_2(t)
 \end{pmatrix}
  = \beta(t) A(t) 
 \begin{pmatrix}
  \tilde{x}_1(t) \\ 
  \tilde{x}_2(t)
 \end{pmatrix}
 + \mathcal{O}(\|\tilde{x}(t)\|^2)
\end{equation}
with 
\begin{equation*}
 A(t) = 
  \begin{pmatrix}
   -\alpha_1 & 1 \\ 
   -\alpha_2-S(t)I(t) & -I(t)
  \end{pmatrix}
  \;.
\end{equation*}
Note that, since we are linearizing the estimation error around a trajectory 
(rather than an equilibrium), the linearized system is time-varying. Fortunately,
we can ensure its stability with a simple quadratic Lyapunov function.
\begin{prop} \label{prop:delay_free}
 Set
 \begin{equation} \label{eq:alphas}
  \alpha_2 > \frac{\alpha_1^2+1}{4\sqrt{2}\alpha_1-1} \;, \quad \alpha_1 > \frac{1}{4\sqrt{2}} \;.
 \end{equation}
 Then~\eqref{eq:linear_error} is quadratically stable\footnote{Recall that~\eqref{eq:linear_error}
 is quadratically stable if there is a common quadratic Lyapunov function for all
 possible $A(t)$~\cite[Ch. 5]{boyd}.}.
\end{prop}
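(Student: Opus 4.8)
The plan is to exhibit a single symmetric positive-definite matrix $P$ such that $\beta A(t)^\top P + \beta P A(t) < 0$ for every admissible $A(t)$ and every $\beta \in [\beta_{\min},\beta_{\max}]$, since by the cited characterization of quadratic stability (Boyd \emph{et al.}, Ch. 5) such a common quadratic Lyapunov function $V(\tilde{x}) = \tilde{x}^\top P \tilde{x}$ certifies the claim. Because $\beta(t) > 0$ multiplies the whole matrix, it only rescales the derivative and does not affect the sign of the quadratic form, so I can drop $\beta$ and work with $A(t)^\top P + P A(t) < 0$ uniformly in the remaining time-varying entries.

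The key structural observation is to identify exactly which entries of $A(t)$ vary. Writing $A(t)$ out, the only time-dependent quantities are $S(t)I(t)$ and $I(t)$, both of which are bounded: since the model is normalized with $S,I > 0$ and $S + I \le 1$, we have $0 < I(t) \le 1$ and $0 < S(t)I(t) \le 1/4$ (the product is maximized at $S = I = 1/2$). Hence $A(t)$ lives in a bounded, in fact polytopic, set. The natural simplification is to pick $P$ to be \emph{diagonal}, say $P = \operatorname{diag}(p_1,p_2)$, because the off-diagonal structure of $A(t)$ is already favorable: the $(1,2)$ and $(2,1)$ entries have opposite signs in the constant part. With a diagonal $P$, the symmetric part $A^\top P + P A$ becomes a $2\times 2$ matrix whose negative definiteness reduces, via the leading-principal-minor (Sylvester) criterion, to two scalar inequalities: the $(1,1)$ entry must be negative, and the $2\times 2$ determinant must be positive.

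First I would compute $A^\top P + P A$ explicitly with $P = \operatorname{diag}(p_1,p_2)$ and, anticipating the stated bound involving $4\sqrt{2}$, make the convenient normalizing choice that balances $p_1$ against $p_2$ (the factor $\sqrt{2}$ strongly suggests taking $p_1/p_2 = \sqrt{2}$ or a similarly clean ratio). The $(1,1)$ diagonal condition will read $-2\alpha_1 p_1 < 0$, which holds automatically. The determinant condition is the substantive one: it yields a quadratic form in the entries of $A(t)$ that must stay positive for all admissible $I(t)$ and $S(t)I(t)$. Because the expression is affine (or at worst convex) in these bounded parameters, it suffices to verify positivity at the extreme values, which is where the worst case $S(t)I(t) = 1/4$ and the range of $I(t)$ enter. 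Imposing positivity at the worst case and solving for the gains should reproduce exactly the hypothesis $\alpha_2 > (\alpha_1^2+1)/(4\sqrt{2}\alpha_1 - 1)$ together with the constraint $\alpha_1 > 1/(4\sqrt{2})$, the latter being precisely what keeps the denominator positive so that the bound on $\alpha_2$ is meaningful.

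The main obstacle I anticipate is the uniform (worst-case) handling of the time-varying entries in the determinant inequality: one must be careful that the quadratic-in-$A(t)$ expression is indeed monotone or convex in $I(t)$ and $S(t)I(t)$ over their admissible ranges, so that checking the vertices of the parameter box genuinely certifies the whole polytope. In particular the cross term coupling $-\alpha_2 - S(t)I(t)$ with $-I(t)$ must be shown not to create an interior worst case. Once convexity in the uncertain parameters is established — or, equivalently, once the problem is recast as a finite set of vertex LMIs in the spirit of the polytopic-embedding approach mentioned in the introduction — the remaining computation is a routine discriminant check that pins down the exact constants in~\eqref{eq:alphas}.
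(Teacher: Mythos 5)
Your overall strategy---exhibit one positive-definite $P$, drop the scalar factor $\beta(t)>0$, apply the leading-principal-minor criterion to $A^\top P+PA$, and handle the time-varying entries through their bounds (in particular $S(t)I(t)\le 1/4$)---is exactly the paper's strategy. However, the specific ansatz you commit to, a \emph{diagonal} $P=\operatorname{diag}(p_1,p_2)$, cannot work, and this is a genuine gap rather than a detail to be filled in. With diagonal $P$ one computes
\begin{equation*}
 A(t)^\top P + P A(t) =
 \begin{pmatrix}
  -2\alpha_1 p_1 & p_1-(\alpha_2+S(t)I(t))p_2 \\
  p_1-(\alpha_2+S(t)I(t))p_2 & -2I(t)\,p_2
 \end{pmatrix} ,
\end{equation*}
whose $(2,2)$ entry is $-2I(t)p_2$. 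The admissible set contains (and the epidemic trajectory approaches) the regime $I(t)\to 0$, where $A(t)\to\left(\begin{smallmatrix}-\alpha_1 & 1\\ -\alpha_2 & 0\end{smallmatrix}\right)$. A $2\times 2$ Hurwitz matrix with a vanishing diagonal entry admits no diagonal $P$ with $A^\top P+PA<0$: the determinant becomes $-\bigl(p_1-\alpha_2 p_2\bigr)^2\le 0$, so the best you can achieve is negative \emph{semi}definiteness (when $p_1=\alpha_2 p_2$), and the quadratic form degenerates in the direction $(0,1)^\top$ uniformly as $I(t)\to 0$. Since quadratic stability in the sense of the cited footnote requires a single $P$ with $A^\top P+PA<0$ over the whole family, the diagonal route is closed. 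Your guess that the $\sqrt{2}$ comes from a ratio $p_1/p_2=\sqrt{2}$ is also off: in the paper it arises from taking the square root in the bound $|Q|\ge 2\alpha_1^2\alpha_2^2-\tfrac{1}{16}(\alpha_1^2+\alpha_2+1)^2$.

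The fix, which is what the paper does, is to take $P$ as the (necessarily non-diagonal) solution of the Lyapunov equation for the degenerate vertex itself, namely
\begin{equation*}
 P\begin{pmatrix}-\alpha_1 & 1\\ -\alpha_2 & 0\end{pmatrix}
 +\begin{pmatrix}-\alpha_1 & -\alpha_2\\ 1 & 0\end{pmatrix}P=-\,\mathrm{Id},
 \qquad
 P=\frac{1}{2\alpha_1\alpha_2}\begin{pmatrix}\alpha_2^2+\alpha_2 & -\alpha_1\alpha_2\\ -\alpha_1\alpha_2 & \alpha_1^2+\alpha_2+1\end{pmatrix},
\end{equation*}
whose off-diagonal entry is $-1/2$. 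This guarantees $\dot V=-\|\tilde x\|^2+\cdots$ exactly at the worst vertex $I=0$, and the $I$- and $SI$-dependent terms are then treated as a perturbation of the resulting matrix $Q(S,I)$; the first minor is positive since $SI\le 1/4<1$, and bounding the determinant from below using $SI\le 1/4$ yields precisely condition~\eqref{eq:alphas}. Your remaining steps (Sylvester criterion, worst-case evaluation of the bounded parameters, the role of $\alpha_1>1/(4\sqrt 2)$ in keeping the denominator positive) are sound once this correct $P$ is in hand, but as written the proposal stalls at the choice of $P$.
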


\begin{proof}
 Consider the candidate Lyapunov function $V(\tilde{x}) = \tilde{x}^\top P \tilde{x}$
 with $P = P^\top > 0$ the solution of the Lyapunov equation
 \begin{displaymath}
  P 
   \begin{pmatrix}
    -\alpha_1 & 1 \\
    -\alpha_2 & 0
   \end{pmatrix}
  +
   \begin{pmatrix}
    -\alpha_1 & -\alpha_2 \\
     1        & 0
   \end{pmatrix}
   P = -
   \begin{pmatrix}
    1 & 0 \\ 0 & 1
   \end{pmatrix} \;,
 \end{displaymath}
 that is,
 \begin{displaymath}
  P = \frac{1}{2\alpha_1 \alpha_2}
  \begin{pmatrix}
   \alpha_2^2+\alpha_2 & -\alpha_1 \alpha_2 \\
    -\alpha_1 \alpha_2 & \alpha_1^2 + \alpha_2 + 1
  \end{pmatrix} \;.
 \end{displaymath}
 The time-derivative of $V$ along the trajectories of~\eqref{eq:linear_error} is 
 \begin{displaymath}
  \dot{V}(\tilde{x}(t)) = -\frac{\beta(t)}{2\alpha_1\alpha_2}\tilde{x}^\top(t) Q(S(t),I(t)) \tilde{x}(t)
   + \mathcal{O}(\|\tilde{x}(t)\|^3)
 \end{displaymath}
 with 
 \begin{displaymath}
  Q(S,I) = 
   \begin{pmatrix}
                         2(1-SI)\alpha_1\alpha_2 & (\alpha_1^2+\alpha_2+1)SI-\alpha_1\alpha_2 I \\
    (\alpha_1^2+\alpha_2+1)SI-\alpha_1\alpha_2 I & 2\left( \alpha_1\alpha_2 + (\alpha_1^2+\alpha_2+1) I \right)
   \end{pmatrix} \;.
 \end{displaymath}
 The restrictions $(S,I) \in [0,1]^2$, $S+I\le1$ imply that $SI \le 1/4$, so the first
 leading principal minor of $Q(S,I)$, $2(1-SI)\alpha_1\alpha_2$, is strictly positive. Regarding 
 the second leading principal minor, we have
 \begin{multline*}
  |Q(S,I)| = \alpha_1^2\alpha_2^2 \left( 4(1-SI)-I^2 \right)
   + \alpha_1\alpha_2(\alpha_1^2+\alpha_2+1)(4-2SI)I \\
    - (\alpha_1^2+\alpha_2+1)^2S^2I^2 \;.
 \end{multline*}
 Using again $SI \le 1/4$ we obtain the bound
 \begin{displaymath}
  |Q(S,I)| \ge 2\alpha_1^2\alpha_2^2 - \frac{1}{16}(\alpha_1^2+\alpha_2+1)^2 \;.
 \end{displaymath}
 Finally, the condition~\eqref{eq:alphas} ensures that $|Q(S,I)| > 0$, so that $V$
 is indeed a Lyapunov function.
\end{proof}

In the original coordinates, the observer takes the form
\begin{equation} \label{eq:observer}
\begin{aligned}
 \frac{\rd}{\rd t}\breve{S}(t) &= -\beta(t)\left( \breve{S}(t)\breve{I}(t)
  -\alpha_{2}\ln\left(\frac{y(t)}{\breve{I}(t)}\right) \right) \\ 
 \frac{\rd}{\rd t}\breve{I}(t) &= \left(\beta(t)\breve{S}(t)-\gamma +
  \beta(t)\alpha_{1}\ln\left(\frac{y(t)}{\breve{I}(t)}\right) \right) \breve{I}(t) \;.
\end{aligned}
\end{equation}

\begin{figure}
\centering
\includegraphics[width=1.0\columnwidth]{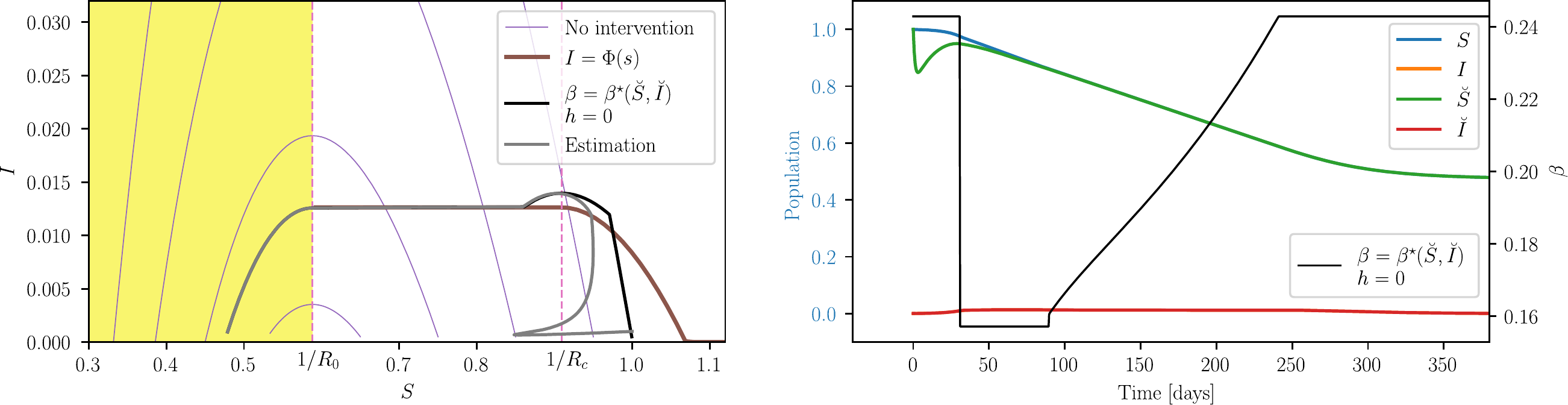}
\caption{True and estimated trajectories  under the feedback law $\beta = \beta^\star(\breve{S},\breve{I})$,
 with no delays.} 
\label{fig:delay_free_partial_info}
\end{figure}

A simulation of the observer's performance is shown in Fig.~\ref{fig:delay_free_partial_info}.
The observer gains, $\alpha^\top = \begin{pmatrix} \alpha_1 & \alpha_2 \end{pmatrix} = 
\begin{pmatrix} 4 & 1 \end{pmatrix}$
were chosen to satisfy~\eqref{eq:alphas}. At first, the incidence of infection exceeds 
$I_{\max}$ by about 11\,\%, but then the epidemics behave as desired.

\begin{figure}
\centering
\includegraphics[width=1.0\columnwidth]{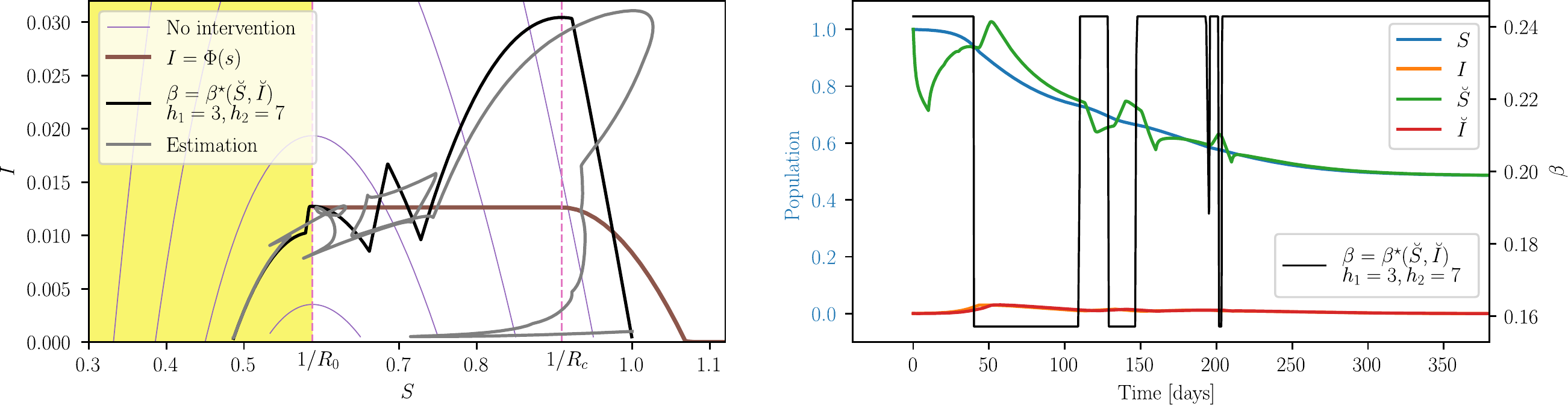}
\caption{True and estimated trajectories under the feedback law $\beta = \beta^\star(\breve{S},\breve{I})$
 with $h_1 = 3$ and $h_2 = 7$ days.} 
\label{fig:delays_observer}
\end{figure}

Consider again the input delay $h_1 = 3$ days, and suppose now there is a measurement delay of
$h_2 = 7$ days~\cite{capistran2020}. The combined effect of both delays is disastrous.
As illustrated in Fig.~\ref{fig:delays_observer}, the hospital capacity is exceeded by 
140\,\%. This motivates the design of the predictor presented in the sequel.

\subsection{Predictor}

Let us rewrite the observer in the original coordinates and remove the 
zero-delays assumption. This gives the predictor
\begin{equation} \label{eq:predictor}
\begin{aligned}
 \frac{\rd}{\rd t}\hat{S}(t) &= -\beta(t)\left( \hat{S}(t)\hat{I}(t)
  -\alpha_{2}\ln\left(\frac{y(t)}{\hat{I}(t-h)}\right) \right) \\ 
 \frac{\rd}{\rd t}\hat{I}(t) &= \left(\beta(t)\hat{S}(t)-\gamma +
  \beta(t)\alpha_{1}\ln\left(\frac{y(t)}{\hat{I}(t-h)}\right) \right) \hat{I}(t)
\end{aligned}
\end{equation}
with $h = h_1+h_2$. Considering that $y(t)=I(t-h_{2})$, the error variables 
\begin{equation} \label{eta_def}
\begin{aligned}
 \tilde{x}_{1}(t) &= \ln \left( \frac{I(t)}{\hat{I}(t-h_{1})}\right)  \\
 \tilde{x}_{2}(t) &= S(t)-\hat{S}(t-h_{1})
\end{aligned}
\end{equation}
evolve according to the dynamics 
\begin{equation} \label{observation error general}
\begin{aligned}
 \frac{\rd}{\rd t}\tilde{x}_{1}(t) &= \beta(t-h_{1})\left(S(t)-\hat{S}(t-h_{1})
  -\alpha_{1}\ln \left( \frac{I(t-h)}{\hat{I}(t-h-h_{1})}\right) \right) \\ 
 \frac{\rd}{\rd t}\tilde{x}_{2}(t) &= \beta(t-h_{1})\Bigg(-\alpha_{2}
  \ln \left( \frac{I(t-h)}{\hat{I}(t-h-h_{1})}\right) + \hat{S}(t-h_{1})\hat{I}(t-h_{1}) - S(t)I(t)\Bigg)
\end{aligned} \;.
\end{equation}
Since $\hat{I}(t-h_{1})=e^{-\tilde{x}_{1}(t)}I(t)$ and $\hat{S}(t-h_1) = S(t)-\tilde{x}_2(t)$, we have 
\begin{equation} \label{prediction error system}
\begin{aligned}
 \frac{\rd}{\rd t}\tilde{x}_{1}(t) &= \beta(t-h_{1})\left(-\alpha_{1}\tilde{x}_{1}(t-h)+\tilde{x}_{2}(t)\right) \\ 
 \frac{\rd}{\rd t}\tilde{x}_{2}(t) &= \beta(t-h_{1})\left(-\alpha_{2}\tilde{x}_{1}(t-h)+\psi (S(t),I(t),\tilde{x}(t))\right)
\end{aligned}
\end{equation}
with
\begin{equation} \label{eq:psi}
 \psi(S,I,\tilde{x}) = ((S-\tilde{x}_{2})e^{-\tilde{x}_{1}} - S)I \;.
\end{equation}
System (\ref{prediction error system}) can be written in the general form
\begin{equation} \label{system}
 \frac{\rd}{\rd t}\tilde{x}(t) = \beta(t-h_{1})\left[ A_0(t)\tilde{x}(t) + A_{1}\tilde{x}(t-h) + G(t,\tilde{x}(t))\tilde{x}(t) \right]
\end{equation}
where the matrices are defined by
\begin{equation} \label{Anominal}
 A_0(t) = 
  \begin{pmatrix}
           0 &  1 \\ 
   -S(t)I(t) & -I(t)
  \end{pmatrix} \;, \quad
 A_1 =  
  \begin{pmatrix}
   -\alpha_{1} & 0 \\ 
   -\alpha_{2} & 0
  \end{pmatrix}
\end{equation}
and
\begin{displaymath}
 G(t,\tilde{x}) = 
  \begin{pmatrix}
                0 & 0 \\ 
   G_{21}(t,\tilde{x}) & G_{22}(t,\tilde{x})
  \end{pmatrix} 
  = \mathcal{O}(\Vert\tilde{x}\Vert)
\end{displaymath}
with 
\begin{equation} \label{g}
\begin{aligned}
 G_{21}(t,\tilde{x}) &= -S(t)I(t)\frac{e^{-\tilde{x}_{1}} + \tilde{x}_{1} - 1}{\tilde{x}_{1}} \\
 G_{22}(t,\tilde{x}) &= -I(t)(e^{-\tilde{x}_{1}}-1)
\end{aligned} \;.
\end{equation}

Since $\beta$ multiplies all the right-hand side of~\eqref{system},
we can do away with it by rescaling time, much in the spirit of perturbation theory~\cite[Ch. 10]{khalil}. 
Define the new time-scale 
\begin{displaymath}
 \tau = g(t) = \int_0^t \beta(s-h_{1})\rd s \;.
\end{displaymath}
Since $\beta$ is strictly positive, $g$ is strictly increasing, invertible and $\tau$ is
indeed a time-scale. Let us define the new state 
$\varepsilon(\tau) = \tilde{x}(g^{-1}(\tau))$ and note that, by the Inverse Function Theorem,
we have 
\begin{equation} \label{eq:der_g}
 \frac{\rd}{\rd \tau} g^{-1}(\tau) = \frac{1}{\frac{\rd}{\rd t}g(t)}\Bigg|_{t=g^{-1}(\tau)} 
                                    = \frac{1}{\beta(t-h_1)}\Bigg|_{t=g^{-1}(\tau)} \;.
\end{equation}
Applying the chain rule and~\eqref{eq:der_g}, we see that the new state evolves
according to
\begin{displaymath}
 \frac{\rd}{\rd \tau} \varepsilon(\tau) = \frac{1}{\beta(t-h_1)}\frac{\rd}{\rd t}\tilde{x}(t)\Big|_{t=g^{-1}(\tau)} \;,
\end{displaymath}
that is,
\begin{equation} \label{eq:system_tau}
 \dot{\varepsilon}(\tau) = B_0(\tau)\varepsilon(\tau) + B_{1}\varepsilon(\tau-\eta(\tau)) 
 + H(\tau,\varepsilon(\tau))\varepsilon(\tau)
\end{equation}
with
\begin{displaymath}
 B_0(\tau) = A_0(g^{-1}(\tau)) \;, \quad B_1 = A_1 \;, \quad H(\tau,\varepsilon) = G(g^{-1}(\tau),\varepsilon)
\end{displaymath}
and
\begin{displaymath}
 \eta(\tau) = \tau - g(g^{-1}(\tau)-h) \;.
\end{displaymath}
The last equation follows from the condition
\begin{displaymath}
 \varepsilon(\tau-\eta(\tau)) = \varepsilon(g(g^{-1}(\tau)-h)) = \tilde{x}(g^{-1}(\tau) - h)) \;.
\end{displaymath}
Observe that $\beta_{\min}h \le \tau - g(t-h) = \int_{t-h}^t \beta(s-h_1)\rd s \le \beta_{\max}h$,
so the time-varying delay is bounded by 
\begin{displaymath}
 \beta_{\min} h\le \eta(\tau) \le \beta_{\max}h = \bar{\eta} \;.
\end{displaymath}

\section{Tuning the predictor gains} \label{sec:tune}

There are two main difficulties in establishing the stability of~\eqref{eq:system_tau}:
The time-varying nature of $B_0$, and the presence of the delayed state. The former
difficulty will be addressed by formulating~\eqref{eq:system_tau} in the framework
of polytopic differential inclusions~\cite[Ch. 5]{boyd}. In order to do so, we will focus 
on the state-space rectangle $[0,1]\times[0,\bar{I}]$, where $I_{\max} \le \bar{I} \le 1$.
When the state is restricted to such rectangle, $B_0$ varies within a fixed polytope of matrices, i.e.,
\begin{equation} \label{eq:polytope}
 B_0(t) \in \Co\{C_1,C_2,C_3\} \;,
\end{equation}
where
\begin{displaymath}
 C_1 = 
  \begin{pmatrix}
   0 & 1 \\
   0 & 0
  \end{pmatrix}
 \;, \quad
 C_2 = 
  \begin{pmatrix}
   0 & 1 \\
   0 & -\bar{I}
  \end{pmatrix} \;, \quad
 C_3
  \begin{pmatrix}
    0 & 1 \\
   -\bar{I} & -\bar{I}  
  \end{pmatrix}
\end{displaymath}
and $\Co$ stands for convex closure, that is,
\begin{displaymath}
 \Co\{C_1,C_2,C_3\} = \left\{ \sum_{i=1}^3 c_i\cdot C_i \mid c_i \ge 0, 
  \, \sum_{i=1}^3 c_i = 1 \right\} \;.
\end{displaymath}
An obvious necessary condition for the stability of the polytopic
model~\eqref{eq:system_tau}-\eqref{eq:polytope} is the stability of its
linearized vertices,
\begin{displaymath}
 \dot{\varepsilon}(\tau) = C_i\varepsilon(\tau) + B_{1}\varepsilon(\tau-\bar{\eta}) \;, \quad i = 1,2,3.
\end{displaymath}
The characteristic equations of the vertices are
\begin{equation} \label{eq:quasipolynomial}
\begin{aligned}
 p_1(s) &= s^2 + \left( s \alpha_1 + \alpha_2\right)e^{-\bar{\eta}s} \\
 p_2(s) &= s^2 + \left( (s + \bar{I}) \alpha_1 + \alpha_2\right)e^{-\bar{\eta}s} + s\bar{I} \\
 p_3(s) &= s^2 + \left( (s + \bar{I}) \alpha_1 + \alpha_2\right)e^{-\bar{\eta}s} + (s+1)\bar{I}
\end{aligned} \;.
\end{equation}

\begin{figure}
\centering
\includegraphics[width=0.4\columnwidth]{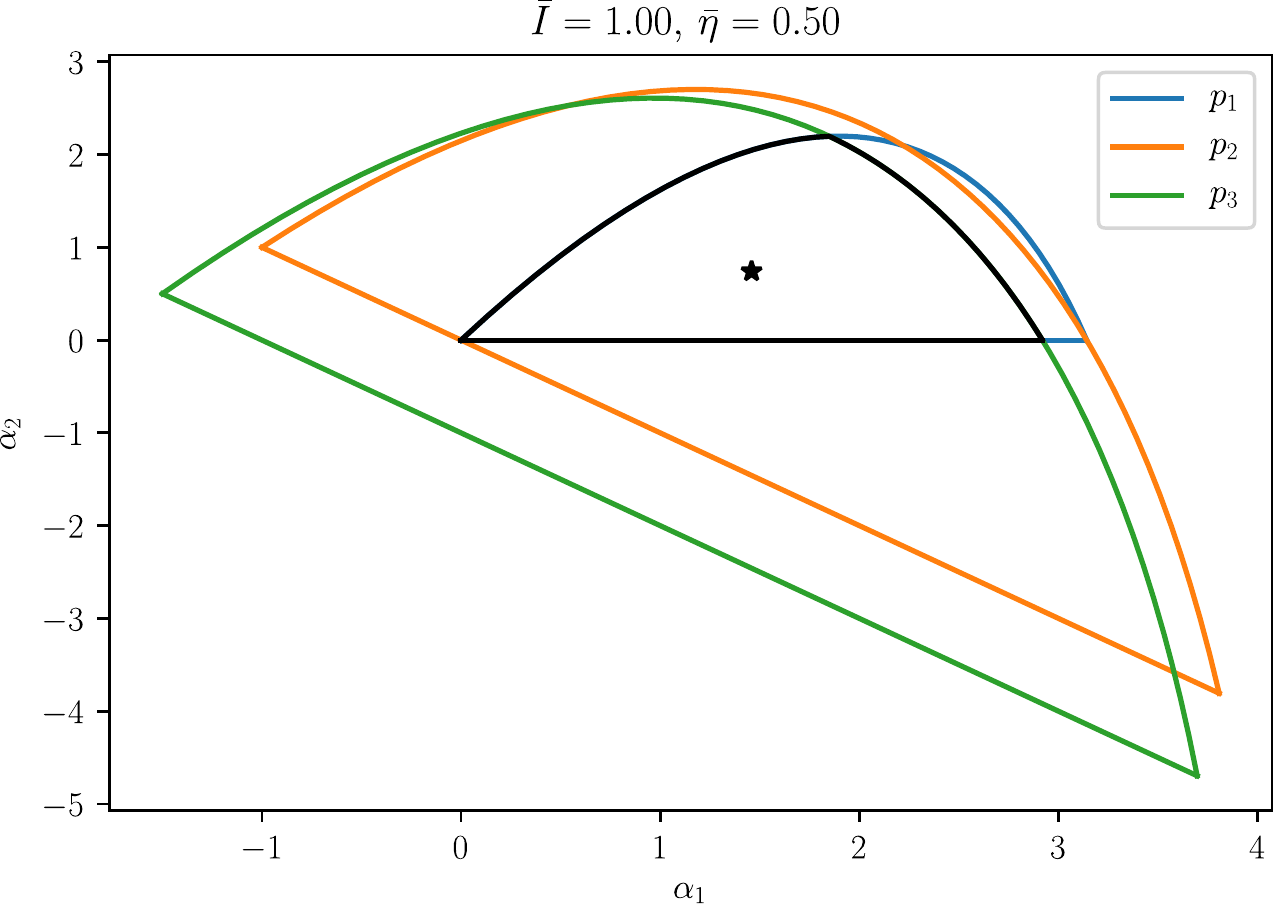}
\caption{Stability/instability boundaries of the quasipolynomials~\eqref{eq:quasipolynomial}
 in the space of parameters $(\alpha_1,\alpha_2)$ for $\bar{I}=1$ and $\bar{\eta}=0.5$. The 
 boundary of the intersection of the three regions is depicted in black.}
\label{fig:vertices_05}
\end{figure}

We will work out the stability/instability boundaries of these quasi\-poly\-no\-mi\-als in
the space of parameters $(\alpha_{1},\alpha_{2})$. According to the
D-partition method~\cite{Neimark1949}, these boundaries correspond to roots
crossing the imaginary axis of the complex plane. When the crossing root is real
($s=0$), the boundaries are
\begin{align*}
 p_1: \quad \alpha_2 &=0 \\
 p_2: \quad \alpha_2 &= -\alpha_1\bar{I} \\
 p_3: \quad \alpha_2 &= -(\alpha_1+1)\bar{I} \;.
\end{align*}
If a crossing root is imaginary $(s=j\omega)$, the boundaries satisfy
\begin{align*}
 p_1&: \quad 
 \begin{pmatrix}
  \omega \sin(\omega \bar{\eta}) & \cos(\omega \bar{\eta}) \\
  \omega \cos(\omega \bar{\eta}) & -\sin(\omega \bar{\eta})
 \end{pmatrix}
 \alpha
  =
 \begin{pmatrix}
  \omega^2 \\ 0
 \end{pmatrix} \\
 p_2&: \quad
 \begin{pmatrix}
  \omega \sin(\omega \bar{\eta}) + \bar{I}\cos(\omega \bar{\eta}) & \cos(\omega \bar{\eta}) \\
  \omega \cos(\omega \bar{\eta}) - \bar{I}\sin(\omega \bar{\eta}) & -\sin(\omega \bar{\eta})
 \end{pmatrix}
 \alpha
  =
 \begin{pmatrix}
  \omega^2 \\ -\omega\bar{I}
 \end{pmatrix} \\
 p_3&: \quad
 \begin{pmatrix}
  \omega \sin(\omega \bar{\eta}) + \bar{I}\cos(\omega \bar{\eta}) & \cos(\omega \bar{\eta}) \\
  \omega \cos(\omega \bar{\eta}) - \bar{I}\sin(\omega \bar{\eta}) & -\sin(\omega \bar{\eta})
 \end{pmatrix}
 \alpha
  =
 \begin{pmatrix}
  \omega^2-\bar{I} \\ -\omega\bar{I}
 \end{pmatrix}
\end{align*}

\begin{figure}
\centering
\includegraphics[width=0.4\columnwidth]{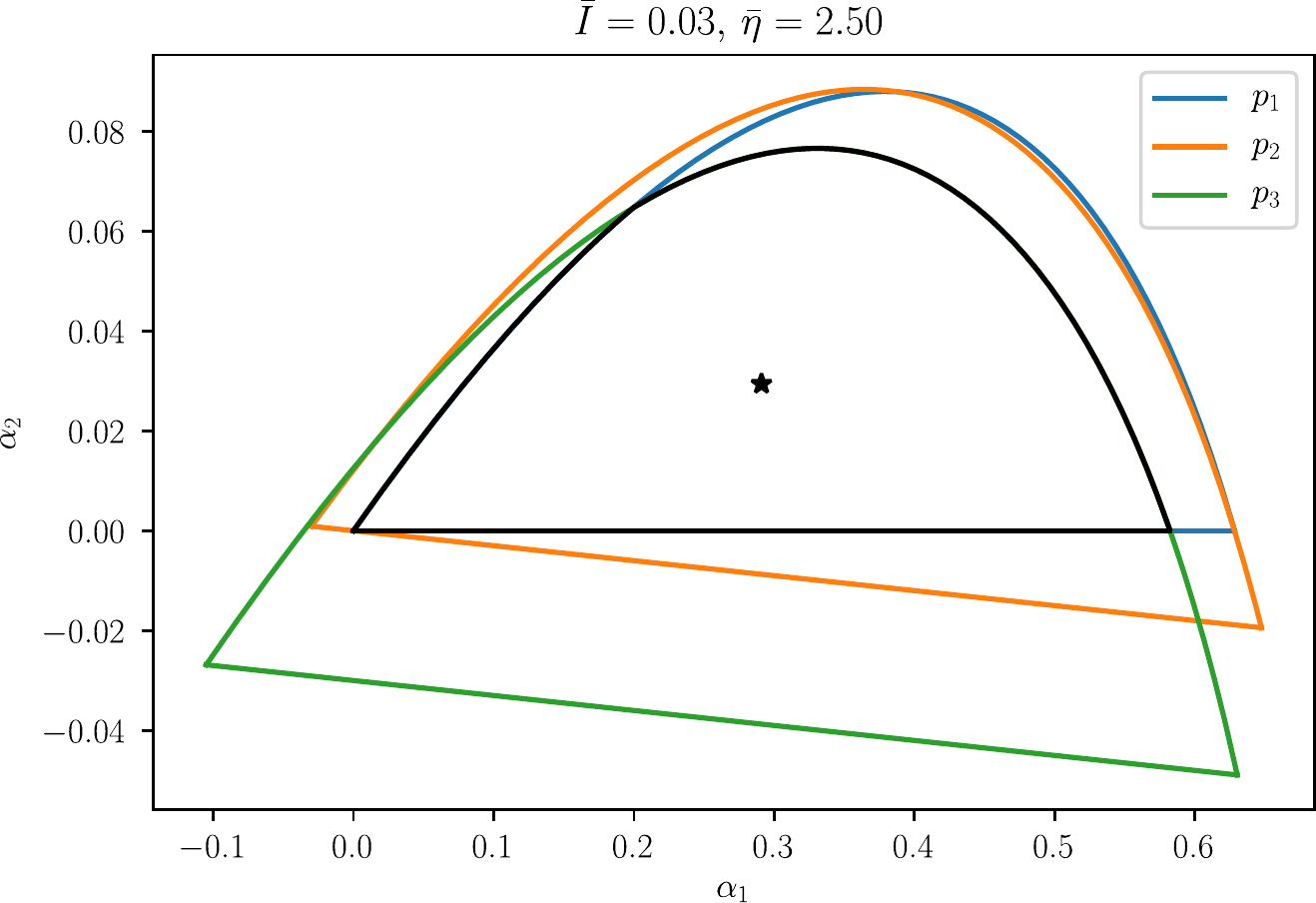}
\caption{Stability/instability boundaries of the quasipolynomials~\eqref{eq:quasipolynomial}
 in the space of parameters $(\alpha_1,\alpha_2)$ for $\bar{I}=0.03$ and $\bar{\eta}=2.5$. The
 boundary of the intersection of the three regions is depicted in black.}
\label{fig:vertices_25}
\end{figure}

The form and disposition of the stability regions depend on $\bar{\eta}$ and $\bar{I}$. 
Figure~\ref{fig:vertices_05} shows their boundaries for a relatively small delay, 
$\bar{\eta} = 0.5$, and the largest incidence, $\bar{I}=1$. Figure~\ref{fig:vertices_25} 
shows the boundaries of the stability regions for a larger delay, $\bar{\eta} = 2.5$,
but a smaller incidence, $\bar{I} = 0.03$. Since stability of the three vertices is a 
necessary condition for the stability of~\eqref{eq:system_tau}-\eqref{eq:polytope}, we require
$\alpha$ to be placed at the intersection of the three regions (boundaries
drawn in black), for example, at an approximate centroid of the intersection (marked with $\star$).

\section{Stability of the predictor} \label{sec:Stab_linear}

Setting $\alpha$ as described in the previous section, i.e., at the
intersection of the stability regions, only ensures that a necessary condition for
stability is satisfied. We will now exploit the polytopic nature of~\eqref{eq:linear_error} 
and the fact that stabilty is ensured by the existence of a Lyapunov-Krasowskii
functional that is common to all the vertices of the polytope.

We will begin by summarizing a general assertion from the book by Fridman~\cite{fridmanE}.

\begin{lemma}
 Consider the candidate Lyapunov-Krasowskii functional
 \begin{equation} \label{eq:V}
  V(\varepsilon_{\tau}) = \varepsilon^{\top}(\tau)P\varepsilon(\tau)+
   \int_{\tau-\bar{\eta}}^{\tau}\varepsilon^{T}(s)S\varepsilon(s)\rd s+\bar{\eta}
   \int_{-\bar{\eta}}^{0}\int_{\tau+\theta}^{\tau}\dot{\varepsilon}^{\top}(s)R\dot{\varepsilon}(s)\rd s \rd\theta
 \end{equation}
 with $P > 0$, $R \ge 0$ and $S \ge 0$. Define
 \begin{displaymath}
  E(\tau) = 
   \begin{pmatrix}
    \varepsilon(\tau) & \dot{\varepsilon}(\tau) & \varepsilon(\tau-\bar{\eta}) & \varepsilon(\tau-\eta(\tau)) 
   \end{pmatrix}^\top \;.
 \end{displaymath}
 The time derivative of $V$ satisfies
 \begin{equation} \label{eq:general_derV}
  \dot{V}(\varepsilon_{\tau}) \le E(\tau)^\top 
   \begin{pmatrix}
    S-R     & P              &  0     & R \\
      \star & \bar{\eta}^2 R &  0     & 0 \\
      \star & \star          & -(S+R) & 0 \\
      \star & \star          &  \star & -2R
   \end{pmatrix}
   E(\tau) \;,
 \end{equation}
 where the terms $\star$ are such that the overall matrix is symmetric.
\end{lemma}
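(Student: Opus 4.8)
The plan is to differentiate the three pieces of the functional $V$ in~\eqref{eq:V} along the solutions of~\eqref{eq:system_tau} and to recognize each resulting term as an entry of the quadratic form on the right of~\eqref{eq:general_derV}. Write $V = V_1 + V_2 + V_3$, where $V_1 = \varepsilon^\top(\tau)P\varepsilon(\tau)$, $V_2$ is the single integral weighted by $S$, and $V_3$ is the double integral weighted by $R$. First I would dispatch the two easy pieces. Differentiating $V_1$ gives $\dot V_1 = 2\varepsilon^\top(\tau)P\dot\varepsilon(\tau)$, which is precisely the off-diagonal block pairing $\varepsilon(\tau)$ with $\dot\varepsilon(\tau)$, i.e.\ the entry $P$ in position $(1,2)$. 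For $V_2$, the fundamental theorem of calculus yields $\dot V_2 = \varepsilon^\top(\tau)S\varepsilon(\tau) - \varepsilon^\top(\tau-\bar\eta)S\varepsilon(\tau-\bar\eta)$, which accounts for the $+S$ in the $(1,1)$ entry and the $-S$ sitting inside the $(3,3)$ entry.

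The substantive work is in $V_3$. Applying the Leibniz rule to the double integral, the boundary contribution at the upper limit is constant in $\theta$ and its outer integration produces $\bar\eta^2\dot\varepsilon^\top(\tau)R\dot\varepsilon(\tau)$ — the $(2,2)$ entry — while the remaining contribution is $-\bar\eta\int_{\tau-\bar\eta}^{\tau}\dot\varepsilon^\top(s)R\dot\varepsilon(s)\,\rd s$. This integral is the crux: it must be converted into the $R$-dependent quadratic terms in $\varepsilon(\tau)$, $\varepsilon(\tau-\bar\eta)$ and $\varepsilon(\tau-\eta(\tau))$ that populate the remaining diagonal entries and the $(1,4)$ block. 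The tool is Jensen's inequality. I would split the integral at the instantaneous delay, $\int_{\tau-\bar\eta}^{\tau} = \int_{\tau-\bar\eta}^{\tau-\eta(\tau)} + \int_{\tau-\eta(\tau)}^{\tau}$ (legitimate since $0 < \eta(\tau) \le \bar\eta$), and on each subinterval bound the quadratic-in-$\dot\varepsilon$ integral from below by the squared increment of $\varepsilon$,
\[
 \int_{\tau-\eta(\tau)}^{\tau}\dot\varepsilon^\top(s)R\dot\varepsilon(s)\,\rd s \ge \tfrac{1}{\eta(\tau)}\bigl(\varepsilon(\tau)-\varepsilon(\tau-\eta(\tau))\bigr)^\top R\bigl(\varepsilon(\tau)-\varepsilon(\tau-\eta(\tau))\bigr) \;,
\]
and analogously for the lower piece with increment $\varepsilon(\tau-\eta(\tau))-\varepsilon(\tau-\bar\eta)$.

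The weighting is the delicate point, and the step I expect to be the main obstacle. Jensen produces the positive factors $\bar\eta/\eta(\tau)$ and $\bar\eta/(\bar\eta-\eta(\tau))$ in front of the two increments, each of which is no smaller than one (the degenerate endpoint $\eta(\tau)=\bar\eta$ is harmless because the corresponding increment vanishes). Since $R \ge 0$, I would discard this excess by replacing both factors with one, after which the increments expand into the lower-right $R$-block: the term $-2\varepsilon^\top(\tau-\eta(\tau))R\varepsilon(\tau-\eta(\tau))$ in the $(4,4)$ entry (contributed once by each piece), the $-R$ corrections to the $(1,1)$ and $(3,3)$ entries, and the cross term $+2\varepsilon^\top(\tau)R\varepsilon(\tau-\eta(\tau))$ in position $(1,4)$. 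Summing $\dot V_1 + \dot V_2 + \dot V_3$ and collecting everything as $E(\tau)^\top(\cdot)E(\tau)$ then yields~\eqref{eq:general_derV}. The one subtlety I would watch closely is the additional cross term coupling $\varepsilon(\tau-\bar\eta)$ and $\varepsilon(\tau-\eta(\tau))$ that the lower piece generates; reconciling it with the stated block — either through the reciprocally-convex combination argument or through a suitable majorization — is what pins down the remaining entries and is the place most likely to require care.
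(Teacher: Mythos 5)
Your route is exactly the one the paper takes---its entire proof of this lemma is the single sentence ``differentiate $V$ and apply Jensen's lemma''---and every step you carry out is correct: the derivatives of the three pieces, the split of $-\bar{\eta}\int_{\tau-\bar{\eta}}^{\tau}\dot{\varepsilon}^\top R\dot{\varepsilon}\,\rd s$ at $\tau-\eta(\tau)$, the Jensen bounds on each subinterval, and the replacement of the factors $\bar{\eta}/\eta(\tau)\ge 1$ and $\bar{\eta}/(\bar{\eta}-\eta(\tau))\ge 1$ by $1$ (valid since $R\ge 0$, with the degenerate endpoint handled as you say).

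The cross term you flag at the end is not a defect of your argument but of the statement. Expanding $-\bigl(\varepsilon(\tau-\eta(\tau))-\varepsilon(\tau-\bar{\eta})\bigr)^\top R\bigl(\varepsilon(\tau-\eta(\tau))-\varepsilon(\tau-\bar{\eta})\bigr)$ necessarily produces $+2\varepsilon^\top(\tau-\eta(\tau))R\varepsilon(\tau-\bar{\eta})$, and this term cannot be majorized away: it is sign-indefinite, so a matrix with $0$ in the $(3,4)$ block does \emph{not} upper-bound the one with $R$ there. Do not reach for the reciprocally convex combination lemma either---that device sharpens the weights $\bar{\eta}/\eta(\tau)$ and $\bar{\eta}/(\bar{\eta}-\eta(\tau))$ at the cost of introducing a slack matrix, but it still leaves a nonzero $(3,4)$ coupling. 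The resolution is that the $(3,4)$ entry in~\eqref{eq:general_derV} should read $R$ rather than $0$; this is confirmed by the matrix $Q$ in~\eqref{eq:Q}, whose $(3,4)$ entry is $R$ and which is obtained from the lemma's matrix by adding the descriptor identity~\eqref{eq:derSys}, whose $(3,4)$ block is zero. With that correction your derivation closes the proof completely; as written, the displayed inequality~\eqref{eq:general_derV} is the one statement that does not follow.
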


The lemma is easily proved by differentiating $V$ and applying Jensen's lemma~\cite[Ch. 3]{fridmanE}.

\begin{theorem} \label{thm:LMI}
 Let
 \begin{multline} \label{eq:Q}
  Q(C;P,R,S,P_2,P_3) = \\
   \begin{pmatrix}
    C^{T}P_{2}+P_{2}^{T}C+S-R &  P-P_{2}^{T}+C^{T}P_{3}        & 0      & P_{2}^{T}B_{1}+R \\ 
                        \star & -P_{3}-P_{3}^{T}+\bar{\eta}^2R & 0      & P_{3}^{T}B_{1} \\ 
                        \star & \star                          & -(S+R) & R \\ 
                        \star & \star                          & \star  & -2R
   \end{pmatrix} \;,
 \end{multline}
 where $C,P,R,S,P_2$ and $P_3$ are $2\times 2$ matrices. Suppose that there exists
 $P > 0$, $R \ge 0$, $S \ge 0$ and $P_2$, $P_3$ such that 
 \begin{equation} \label{eq:LMI}
  Q(C_i,P,R,S,P_2,P_3) < 0 \;, \quad i=1,2,3 \;.
 \end{equation}
 Then, the trivial solution of the observer error-dynamics~\eqref{eq:system_tau} is asymptotically stable.
\end{theorem}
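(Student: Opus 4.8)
The plan is to establish the bound
\[
 \dot{V}(\varepsilon_\tau) \le E(\tau)^\top Q\bigl(B_0(\tau);P,R,S,P_2,P_3\bigr)\,E(\tau) + (\text{higher-order remainder})
\]
along the trajectories of~\eqref{eq:system_tau}, and then to make the right-hand side negative using the polytopic structure~\eqref{eq:polytope} together with the LMIs~\eqref{eq:LMI}. The Lemma already provides an upper bound on $\dot V$ in terms of the extended state $E(\tau)$ that uses only Jensen's inequality and does not yet involve the dynamics. To inject the system I would use Fridman's descriptor (free-weighting) trick: since $B_0(\tau)\varepsilon(\tau)+B_1\varepsilon(\tau-\eta(\tau))+H(\tau,\varepsilon)\varepsilon(\tau)-\dot\varepsilon(\tau)=0$ identically along solutions, I add to the Lemma's bound the null term
\[
 0 = 2\bigl[\varepsilon^\top(\tau)P_2^\top+\dot\varepsilon^\top(\tau)P_3^\top\bigr]\bigl[B_0(\tau)\varepsilon(\tau)+B_1\varepsilon(\tau-\eta(\tau))+H(\tau,\varepsilon)\varepsilon(\tau)-\dot\varepsilon(\tau)\bigr],
\]
with $P_2,P_3$ arbitrary. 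Expanding and collecting the quadratic forms in the blocks of $E(\tau)$ reproduces exactly the entries of $Q$: the blocks $C^\top P_2+P_2^\top C$, $P-P_2^\top+C^\top P_3$, $-P_3-P_3^\top$, $P_2^\top B_1$ and $P_3^\top B_1$ are the contributions of the descriptor term (with $C=B_0(\tau)$), while $S-R$, $\bar\eta^2R$, $-(S+R)$ and the $R$-blocks come from the Lemma. The only residue is the scalar $2[\varepsilon^\top P_2^\top+\dot\varepsilon^\top P_3^\top]H(\tau,\varepsilon)\varepsilon$ produced by the nonlinearity.

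I would then discharge the time-variation by convexity. Each entry of $Q$ is affine in its first argument, so $C\mapsto Q(C;\cdot)$ is affine and, for any convex combination $B_0(\tau)=\sum_i c_i C_i$, one has $Q(B_0(\tau);\cdot)=\sum_i c_i\,Q(C_i;\cdot)$. While the state stays in the rectangle $[0,1]\times[0,\bar I]$, the embedding~\eqref{eq:polytope} furnishes exactly such a representation with $c_i\ge0$ and $\sum_i c_i=1$; note also that the delayed term enters through the \emph{constant} matrix $B_1=A_1$, so no further vertex analysis is needed there. The strict LMIs~\eqref{eq:LMI} then give $Q(B_0(\tau);\cdot)<0$ uniformly in $\tau$, whence $E^\top Q(B_0(\tau))E\le-\mu\lVert\varepsilon(\tau)\rVert^2$ for some $\mu>0$.

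The remaining and principal difficulty is the nonlinear term. Since $G=\mathcal O(\lVert\tilde x\rVert)$, and therefore $H(\tau,\varepsilon)=\mathcal O(\lVert\varepsilon\rVert)$, there is a neighborhood of the origin and a constant $L$ with $\lVert H(\tau,\varepsilon)\rVert\le L\lVert\varepsilon\rVert$. Using the dynamics to bound $\lVert\dot\varepsilon(\tau)\rVert$ by $\lVert\varepsilon(\tau)\rVert$ and $\lVert\varepsilon(\tau-\eta(\tau))\rVert$, the residual descriptor term is cubic, $2[\varepsilon^\top P_2^\top+\dot\varepsilon^\top P_3^\top]H\varepsilon=\mathcal O(\lVert E\rVert\,\lVert\varepsilon\rVert^2)$, and is thus dominated by $-\mu\lVert\varepsilon\rVert^2$ once $\lVert\varepsilon\rVert$ is small enough. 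I would therefore restrict attention to a sublevel set $\{V\le c\}$ chosen small enough that both the cubic domination holds and the state remains inside the rectangle where~\eqref{eq:polytope} is valid, argue that this set is invariant, and conclude $\dot V<0$ on it. The Lyapunov--Krasovskii theorem then yields local asymptotic stability of the trivial solution, since $V$ is positive definite for $P>0$, $S,R\ge0$ and admits the usual quadratic lower and upper bounds in $\lVert\varepsilon_\tau\rVert_{\bar\eta}$. The delicate points are the estimate of $\dot\varepsilon$ feeding the cubic bound, and the invariance argument guaranteeing that trajectories never leave the rectangle on which the polytopic embedding is in force.
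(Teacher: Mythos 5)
Your proposal is correct and follows essentially the same route as the paper: the descriptor null term with free matrices $P_2,P_3$ added to the Lemma's Jensen-based bound, the affine dependence of $Q$ on $C$ combined with the polytopic embedding~\eqref{eq:polytope}, and domination of the cubic residue from $H$ near the origin. Your added care about the sublevel-set/invariance argument and the estimate of $\dot{\varepsilon}$ only makes explicit what the paper leaves implicit in its final sentence.
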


\begin{proof}
 The proof follows the descriptor approach~\cite{fridmanE}, but we pay special attention
 to the nonlinear terms in~\eqref{eq:system_tau} and incorporate the time-varying nature of
 the system. Consider again~\eqref{eq:V} and note that, for any $P_2, P_3 \in \RE^{2\times 2}$,
 \begin{multline*}
  2\left( \varepsilon(\tau)^\top P_2^\top + \dot{\varepsilon}(\tau)^\top P_3^\top \right) \cdot
   \big( B_0(\tau)\varepsilon(\tau) + B_{1}\varepsilon(\tau-\eta(\tau)) \\
    + H(\tau,\varepsilon(\tau))\varepsilon(\tau) - \dot{\varepsilon}(\tau) \big) = 0
 \end{multline*}
 or, in matrix form,
 \begin{multline} \label{eq:derSys}
    E(\tau)^\top
   \begin{pmatrix}
    B_0(\tau)^\top P_2 + P_2^\top B_0(\tau) & -P_2^\top+B_0(\tau)^\top P_3 & 0     & P_2^\top B_1 \\
    \star                                   & -P_3-P_3^\top                & 0     & P_3^\top B_1 \\
    \star                                   & \star                        & 0     & 0 \\ 
    \star                                   & \star                        & \star & 0 \\
   \end{pmatrix}
    E(\tau) \\ + 
   2\left( \varepsilon(\tau)^\top P_2^\top + \dot{\varepsilon}(\tau)^\top P_3^\top \right) \cdot
    H(\tau,\varepsilon(\tau))\cdot \varepsilon(\tau) = 0 \;.
 \end{multline}
 Observe that
 \begin{displaymath}
  2\left( \varepsilon(\tau)^\top P_2^\top + \dot{\varepsilon}(\tau)^\top P_3^\top \right) \cdot
    H(\tau,\varepsilon(\tau))\cdot \varepsilon(\tau) = \mathcal{O}(\|E(\tau)\|^3) \;,
 \end{displaymath}
 so adding~\eqref{eq:derSys} to~\eqref{eq:general_derV} gives
 \begin{displaymath}
  \dot{V}(\varepsilon_{\tau}) \le -E^\top(\tau) Q(B_0(t);P,R,S,P_2,P_3) E(\tau) + \mathcal{O}(\|E(\tau)\|^3) \;.
 \end{displaymath}
 
 Because of~\eqref{eq:polytope}, there exists continuous real-valued functions $c_i$ such that
 \begin{displaymath}
  B_0(\tau) = \sum_{i=1}^3 c_i(\tau)C_i \;, \quad c_i(\tau) \ge 0 \;, \quad \sum_{i=1}^3 c_i(\tau) \equiv 1 \;.
 \end{displaymath}
 Since $C$ appears affinely in~\eqref{eq:Q}, we have~\cite[Remark~3.6]{fridmanE}
 \begin{displaymath}
  \dot{V}(\varepsilon_{\tau}) \le -\sum_{i=1}^3 c_i(\tau) E^\top(\tau) Q(C_i;P,R,S,P_2,P_3) E(\tau) + 
   \mathcal{O}(\|E(\tau)\|^3) \;.
 \end{displaymath}
 By~\eqref{eq:LMI}, the derivative of $V$ is negative definite in a neighborhood of the origin and
 asymptotic stability follows.
\end{proof}

\begin{figure}
\centering
\includegraphics[width=1.0\columnwidth]{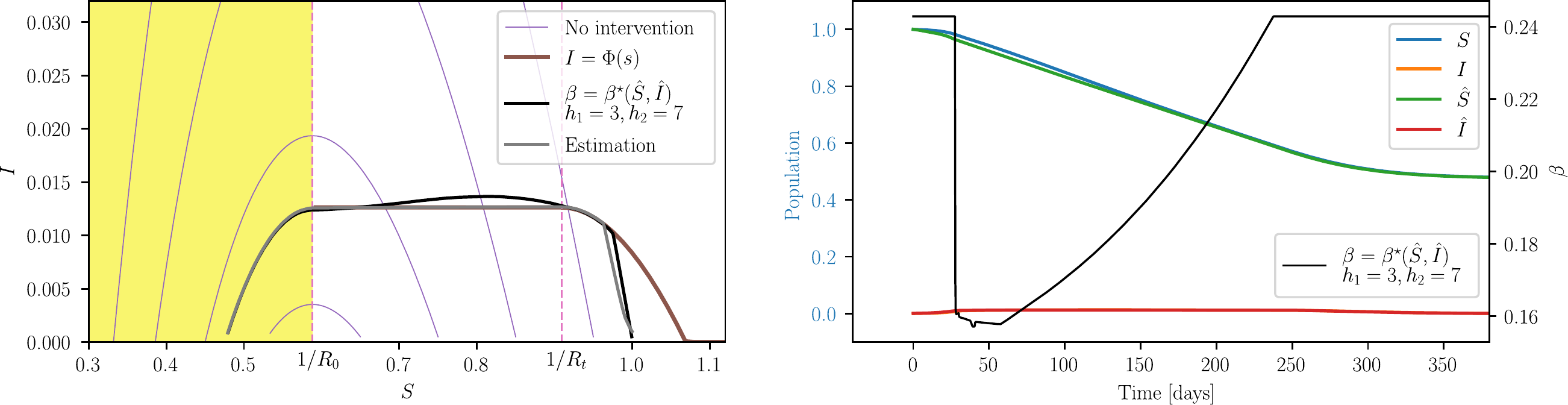}
\caption{True and estimated trajectories under the feedback law $\beta = \beta^\star(\hat{S},\hat{I})$,
 with $h_1 = 3$ and $h_2 = 7$ days.} 
\label{fig:delays_predictor}
\end{figure}

For our example, we take a conservative approach and set
\begin{displaymath}
 \bar{\eta} = 5 > (h_1+h_2)\beta_{\max} = 2.4
\end{displaymath}
and $\bar{I}= 30 \times 10^{-3}$ (recall that $I_{\max} = 12.63 \times 10^{-3}$). 
The gain $\alpha = \begin{pmatrix} 0.115 & 0.005 \end{pmatrix}$ lays within the 
intersection of the stability regions of $p_1$, $p_2$ and $p_3$.

The LMI~\eqref{eq:LMI} was solved using SCS~\cite{scs}. A solution is
\begin{displaymath}
 P = 
  \begin{pmatrix}
   51.1 & -140.6 \\ -140.6 & 979.2
  \end{pmatrix} \;, \quad 
 R = S = 
  \begin{pmatrix}
   16.3 & -0.6 \\ -0.6 & 3.3
  \end{pmatrix}
\end{displaymath}
and
\begin{displaymath}
 P_2 = P_3 = 
  \begin{pmatrix}
     42.3 & -85.3 \\
   -140.5 & 984.4 
  \end{pmatrix} \;,
\end{displaymath}
so the prediction error converges to zero asymptotically. This is illustrated
in Fig.~\ref{fig:delays_predictor}. The hospital capacity is now exceeded by
only 7.8\,\%. 

\section{On the effect of measurement errors} \label{sec:noise}

A frequent situation in epidemics is poor output variable measurement, mainly due 
to underregistration. A sound assumption is that the output is proportional to the measured
variable, $I(t-h_2)$. The proportion may be time-varying but always less than 1. 
It is described as 
\begin{equation*}
 y(t) = I(t-h_2) m(t-h_2)
\end{equation*}
with 
\begin{equation} \label{measurement disturbance}
 m(t) \in [1-\delta,1] \;, \quad  0 \leq \delta < 1 \;.
\end{equation}
The logarithmic term in the prediction error
dynamics~\eqref{observation error general} is now 
\begin{multline*}
 \ln \left( \frac{y(t-h_1)}{\hat{I}(t-h-h_{1})}\right) = 
  \ln \left( \frac{I(t-h)m(t-h)}{\hat{I}(t-h-h_{1})}\right) \\  =
  \ln \left( \frac{I(t-h)}{\hat{I}(t-h-h_{1})}\right) + \ln(m(t-h)) \;.
\end{multline*}
Define $d(t)=\ln (m(t))$ and observe that, by~\eqref{measurement disturbance},
is such that $\left\vert d(t)\right\vert \leq \bar{d}$ with
\begin{displaymath}
 \bar{d} = \ln\left( \frac{1}{1-\delta} \right) \;.
\end{displaymath}
The prediction error now has the dynamics 
\begin{equation} \label{system meas error}
 \dot{\varepsilon}(\tau) = B_0(\tau)\varepsilon(\tau) + B_{1}\varepsilon(\tau-\eta(\tau)) 
 + H(\tau,\varepsilon(\tau))\varepsilon(\tau) 
 - \alpha d(\tau-\eta(\tau)) \;.
\end{equation}

To analyze the effect of the measurement error on the system response, we
compute the time derivative of the functional~\eqref{eq:V}, now along the trajectories of
system (\ref{system meas error}). Following the same steps as in the proof of 
Theorem~\ref{thm:LMI}, we obtain
\begin{multline*}
 \dot{V}(\varepsilon_{\tau}) \le -\sum_{i=1}^3 c_i(\tau) E^\top(\tau) Q(C_i;P,R,S,P_2,P_3) E(\tau) \\
  - 2\left( \varepsilon(\tau)^\top P_2^\top + \dot{\varepsilon}(\tau)^\top P_3^\top \right) 
  \alpha d(\tau-\eta(\tau)) + \mathcal{O}(\|E(\tau)\|^3) \;.
\end{multline*}
From the order relation
\begin{displaymath}
 2\left( \varepsilon(\tau)^\top P_2^\top + \dot{\varepsilon}(\tau)^\top P_3^\top \right)
 \alpha d(\tau-\eta(\tau)) = \mathcal{O}(\bar{d}\cdot\|E(\tau)\|)
\end{displaymath}
we conclude that, for $\bar{d}$ small enough, the solutions are ultimately bounded 
with ultimate bound proportional to $\bar{d}$.

We now simulate the effects of the measurement noise described above.
We take $\delta = 0.5$ and define $m(t)$ a random variable with a beta
distribution. We perform simulations for the predictor and the observer 
(see Fig.~\ref{fig:delays_noise}). 
The measurement errors result in a higher number of infected people. 
The poor performance of the observer had already been established. 
Noise simply makes it worse. In the case of the predictor, the increment 
is relatively low if we take into account the high amplitude of the error.

\begin{figure}
\centering
\includegraphics[width=1.0\columnwidth]{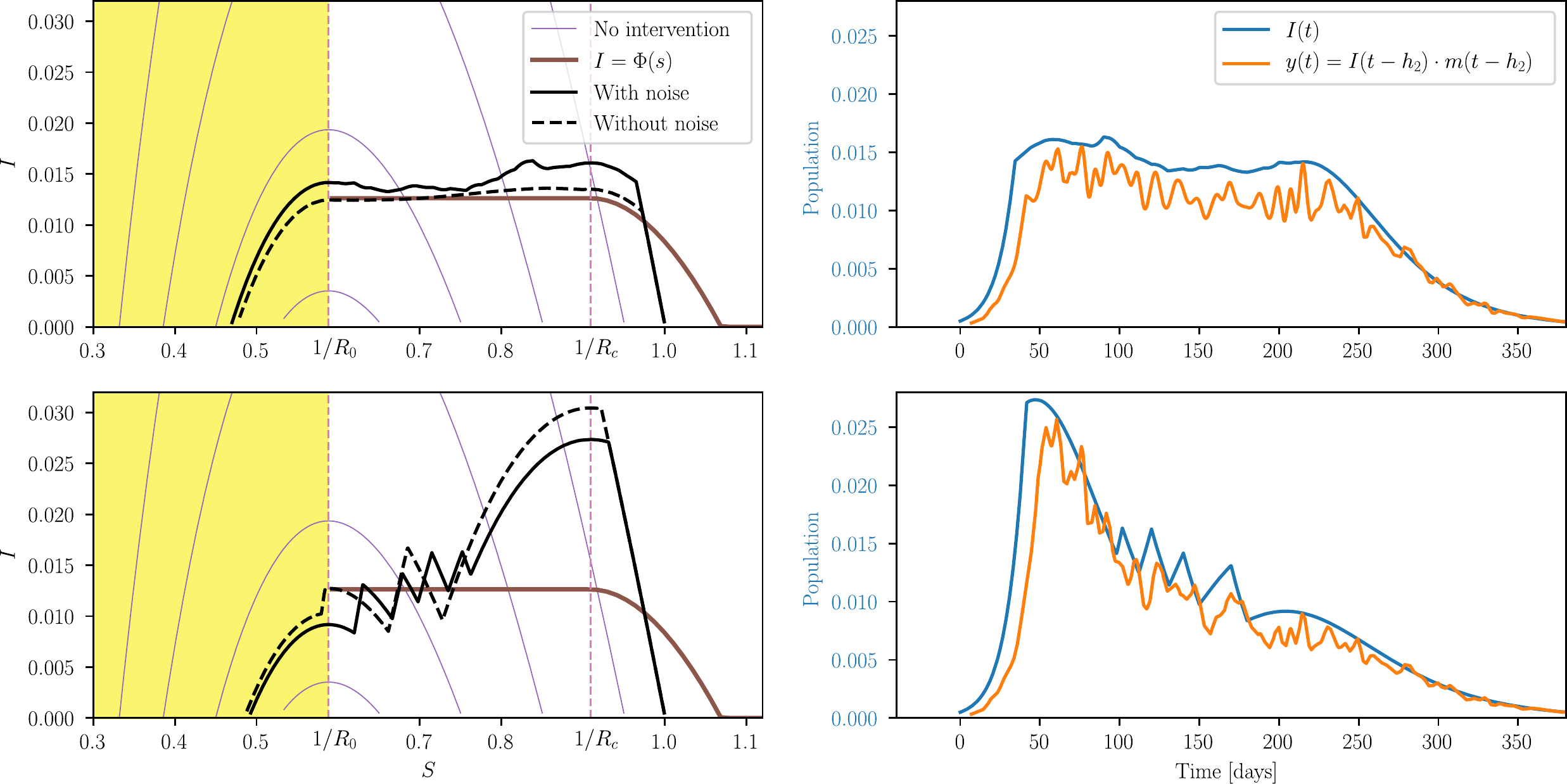}
\caption{Closed-loop trajectories using the predictor~\eqref{eq:predictor} (top)
 and the observer~\eqref{eq:observer} (bottom). Simulations were performed with and without measurement errors.}
\label{fig:delays_noise}
\end{figure}

\section{Conclusions}

We have presented a predictor for systems with large input delays. The predictor 
is designed using the observer-predictor methodology introduced in the past years.
In contrast with existing proposals, we present simple tuning criteria for ensuring
the simultaneous stability of the polytopic model's vertices. The stability of the
complete polytopic model is then established formally with the help of a
Lyapunov-Krasovskii functional. The functional can also be used to assess the sensitivity
of the predictor with respect to error measurements. It is worthy of mentioning that
it can also be used to find estimates of the domain of attraction, robustness bounds
for the delay, and parameter uncertainties, among other problems of interest.

Our current research concerns the extension of the present results, obtained for
two-dimensional systems, to $n$-dimensional ones, and to apply them to the study of
more elaborated epidemic models involving states such as vaccinated individuals and
asymptomatic ones.

\bibliographystyle{plain}
\bibliography{references}

\end{document}